\documentclass[preprint]{elsarticle}

\usepackage[a4paper,left=22mm,top=25mm,right=22mm,bottom=35mm, marginparsep=.1in]{geometry}

\usepackage[
 	colorlinks=true,
	urlcolor=black,
	linkcolor=blue
]{hyperref}

\biboptions{sort&compress}

\usepackage{booktabs}

\usepackage{amsmath, amsfonts, amsthm, amssymb}
\usepackage{mathtools}
\usepackage{mathrsfs}
\newtheorem{theorem}{Theorem}
\newtheorem{corollary}[theorem]{Corollary}
\newtheorem{lemma}[theorem]{Lemma}
\newtheorem{proposition}[theorem]{Proposition}
\newdefinition{definition}[theorem]{Definition}

\newcommand{\CS}[1][]{\ensuremath{\mathfrak{C}_{#1}}}
\newcommand{\Tsys}{\ensuremath{\mathfrak{T}}}
\DeclareMathOperator{\CC}{\mathtt{C}}
\newcommand{\overlaps}{\between}
\newcommand{\Hasse}{\ensuremath{\mathcal{H}}}

 \usepackage[dvipsnames]{xcolor}
 \newcommand{\REV}[1]{\begingroup#1\endgroup}

\title{NOC NOC, who's there? Clustering systems of tree-child and normal networks}
\author[1]{Anna Lindeberg}
\ead{anna.lindeberg@math.su.se}
\author[2]{Marc Hellmuth}
\ead{marc.hellmuth@uni-leipzig.de}
\affiliation[1]{organization={Department of Mathematics, Faculty of Science, Stockholm University},
addressline={10691 Stockholm},
country={Sweden}}
\affiliation[2]{organization={Theoretical Computer Science Group, Faculty of Mathematics and Computer Science, Leipzig University},
addressline={Augustusplatz 10, 04109 Leipzig},
country={Germany}}

\begin{document}
\begin{abstract}
	Clustering systems provide a natural way to encode structural information contained in
	phylogenetic networks. In this note, we study the clustering systems of normal and
	tree-child networks through an overlap-based property of set systems, called
	\emph{not-overlap-covered (NOC)}. We show that the NOC property is equivalent to
	inclusion-visibility, a memberwise formulation of the strict-compatibility condition previously
	used for tree-child clustering systems.
	
	We characterize normal networks as precisely the semi-regular networks whose clustering systems
	satisfy NOC. Consequently, a clustering system is realized by a normal network if and only if it
	satisfies NOC, or equivalently, if every one of its clusters is inclusion-visible. In this case,
	the Hasse diagram provides a canonical normal realization. 
	These are exactly the clustering systems realized by tree-child networks.

	The NOC formulation yields a sharp quadratic upper bound on the number of distinct clusters of
	tree-child and normal networks and a direct polynomial-time recognition algorithm.
	\REV{Finally, we explore several consequences of the NOC perspective beyond the phylogenetic setting.
 These include connections to the enumeration of normal networks,
	an order-theoretic interpretation of inclusion-visibility, structural properties of NOC set systems,
	and a tractable special case of Minimum Set Cover, which is NP-hard in general.}
\end{abstract}
\begin{keyword}
phylogenetic network \sep normal network \sep tree-child network \sep clustering system \sep
inclusion-visibility \sep not-overlap-covered \sep Hasse diagram \sep minimum set cover
\end{keyword}

\maketitle

\section{Introduction}

Evolutionary histories are often modeled with rooted graphs and were traditionally assumed to be
tree-like. However, when attempting to capture biological mechanisms such as horizontal gene
transfer or hybridization, rooted trees no longer provide a sufficiently general mathematical
framework. This has given rise to the theory of \emph{phylogenetic networks}, which studies directed
acyclic graphs (DAGs) with a unique root, where the set of leaves is identified with a group of
now-living taxa (i.e., species, genes or groups of species). Although motivated by the life
sciences, this area offers a rich variety of interesting questions in discrete and combinatorial
mathematics; see, for example, the books \cite{Dress:2011,Huson:2010} or the surveys
\cite{Huson:2011,Linz:2026} for an introduction.

In this context, the study of clustering systems associated with phylogenetic networks has received
increasing attention in recent years; see, for example,
\cite{Hellmuth:2023,Dai:2026,Barthelemy:08,Brucker:09,Bertrand:14,ALRR:14,Changat:2025,Nakhleh:05}.
To be more precise, a \emph{cluster} $\CC_N(v)$ in a phylogenetic network $N$ is the set of leaves
that are reachable from a given vertex $v$ by a directed path. Biologically, $\CC_N(v)$ can be
interpreted as the set of present-day taxa that may have inherited traits from the ancestral species
represented by $v$. Collecting the clusters associated with all vertices of $N$ yields
$\CS[N]=\{\CC_N(v)\mid v\in V(N)\}$, the \emph{clustering system} of $N$. More generally, a clustering
system on a finite ground set $X$ is a family $\CS$ of subsets of $X$ such that $X\in\CS$,
$\emptyset\notin\CS$, and $\{x\}\in\CS$ for every $x\in X$. Clustering systems provide a natural way
of encoding structural information contained in a phylogenetic network. Understanding which set
systems arise from particular classes of networks is therefore of both theoretical and practical
interest. The classical example is the bijective correspondence between rooted phylogenetic trees
with no vertices of out-degree one and \emph{hierarchies}. A clustering system $\CS$ is a hierarchy
if $ C\cap D\in\{\emptyset,C,D\} $ for all $C,D\in\CS$, and this condition holds precisely for the
clustering systems of rooted phylogenetic trees. For more general classes of phylogenetic networks,
however, the associated clustering systems are substantially more complex and comparatively few
analogous characterizations are known. 

This short note has a single main purpose: to characterize the clustering systems of so-called
\emph{normal networks}. Normal networks \cite{Willson:2010} form a prominent class of phylogenetic
networks; see \cite{Francis:2025} for a recent overview. They are characterized by two properties:
they are shortcut-free, meaning that no arc is redundant in the sense that there exists an
alternative directed path between its endpoints and they satisfy the tree-child property, meaning
that every non-leaf vertex has a child with exactly one parent. Normal networks thus generalize
rooted phylogenetic trees while retaining a number of attractive structural properties, including
reconstructability and statistical identifiability.

The clustering systems of tree-child networks have previously been studied by Alcal{'a} et
al.~\cite{ALRR:14}. They characterized those clustering systems that can be realized by
tree-child reticulate networks in terms of a property called \emph{strict-compatibility} and
provided a polynomial-time recognition and reconstruction algorithm. Their work is formulated for
a more restricted class of reticulate networks than the network framework considered here.

The main result of this paper, Theorem~\ref{thm:normal-characterization}, gives a characterization
of normal networks in this more general setting. To this end, we formulate strict-compatibility
memberwise in terms of what we call \emph{inclusion-visibility} and introduce an equivalent
overlap-based property of arbitrary set systems, called \emph{not-overlap-covered (NOC)}. We show
that a network $N$ is normal if and only if it is semi-regular and every cluster in $\CS[N]$ is
inclusion-visible, or equivalently, if $\CS[N]$ satisfies the NOC property. As a consequence, a
clustering system admits a normal realization if and only if it satisfies NOC. In this case, the Hasse
diagram provides a canonical normal realization.

The NOC formulation also makes additional structural consequences transparent. In particular, we
derive a sharp quadratic upper bound on the number of distinct clusters of tree-child and normal
networks and obtain a direct polynomial-time recognition algorithm. We further record bijective
correspondences with canonical classes of normal networks. Since NOC is defined for arbitrary set
systems, it also leads to questions and consequences beyond the \REV{phylogenetic setting. 
We discuss an enumerative interpretation of the resulting bijections, an order-theoretic
view of inclusion-visibility, hereditary and laminar subclasses of NOC set systems, and an
application to Minimum Set Cover, which is NP-hard in general but becomes polynomial-time solvable
for set systems satisfying the NOC property.}

\section{Preliminaries}

We primarily follow the terminology used in \cite{Hellmuth:2023}.
We consider directed acyclic graphs (DAGs) $G=(V,E)$ with vertex set $V(G)=V$ and arc set $E(G)=E$.
If $(u,v)\in E$ is an arc in $G$, then $u$ is a \emph{parent} of $v$ and $v$ is a \emph{child} of
$u$. A \emph{root} in a DAG is a vertex with no parents. A \emph{leaf} in a DAG is a vertex with no
children. A \emph{network $N$ on $X$} is a DAG with a unique root and leaf-set $X$. For a network
$N$, we write $u\preceq_N v$ if there is a directed path from $v$ to $u$. Note that $u\preceq_N v$
also allows for $u=v$, in which case the directed path just consists of the single vertex $u$. If
$u\preceq_N v$ or $v\preceq_N u$ then $u$ and $v$ are \emph{$\preceq_N$-comparable}. An arc
$(u,v)$ is a \emph{shortcut} if there exists a directed path from $u$ to $v$ that does not contain
the arc $(u,v)$. A network is \emph{shortcut-free} if it does not contain shortcuts. A child
$v$ of $u$ is a \emph{tree-child}, if $v$ has only one parent, namely $u$. A network $N$ on $X$ is
\emph{tree-child} if every vertex $v \in V(N) \setminus X$ has a tree-child \cite{Cardona:2009}, and
$N$ is \emph{normal} if it is tree-child and shortcut-free \cite{Willson:2010}. A vertex $v$ of $N$
is \emph{visible (in $N$)} if there exists a leaf $x\in X$ such that every $\rho x$-path of $N$
passes through $v$, where $\rho$ denotes the unique root of the network $N$. 
By \cite[Lem.~2]{Cardona:2009}, a network is tree-child if and only if all its vertices are visible.
Together with the definition of normal networks, this immediately yields the following.
\begin{theorem}\label{thm:char-normal-vis}
A network is normal if and only if it is shortcut-free and all its vertices are visible.
\end{theorem}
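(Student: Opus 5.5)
This statement is an immediate consequence of the definition of normal networks combined with \cite[Lem.~2]{Cardona:2009}. The proof is a substitution of equivalent conditions, and I would write it as a chain of equivalences.

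By definition (following \cite{Willson:2010}), a network $N$ is normal if and only if it is tree-child and shortcut-free. By \cite[Lem.~2]{Cardona:2009}, $N$ is tree-child if and only if every vertex of $N$ is visible. Replacing the tree-child condition by its equivalent visibility condition in the definition gives: $N$ is normal if and only if $N$ is shortcut-free and all vertices of $N$ are visible. Both directions follow at once, since each step is a genuine equivalence.

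The only step needing care is checking that the notions used in \cite{Cardona:2009} agree with the ones here. The network framework in this paper is fairly general: a DAG with a unique root and leaf set $X$. It does not explicitly forbid vertices of in-degree and out-degree one, and it allows multiple children. So I would check that the equivalence ``tree-child $\iff$ all vertices visible'' holds in this setting. If I had doubts, I would reprove it directly.

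\emph{Tree-child implies visible.} For any vertex $v$, follow tree-children from $v$ downward until reaching a leaf $x$. This is possible because each non-leaf vertex has a tree-child and the DAG is finite. Each vertex on this path has a unique parent, namely its predecessor on the path. Hence every $\rho x$-path must enter $x$ through this chain, tracing back to $v$, and so passes through $v$.

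\emph{Visible implies tree-child.} Suppose a non-leaf vertex $v$ has no tree-child, and let $x$ be a leaf witnessing visibility of $v$. Take a $\rho x$-path through $v$, and let $c$ be the child of $v$ on that path. Since $c$ is not a tree-child, it has another parent $p\neq v$. The vertex $p$ is reachable from $\rho$ by a path avoiding $v$: otherwise $p$ would be a descendant of $v$, or would lie below $v$ on every root path. One handles this by choosing $c$ as the last vertex on the path whose other parents all lie below $v$, and obtains a contradiction with acyclicity or with the visibility of $v$. This yields a $\rho x$-path avoiding $v$, a contradiction.

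I expect this second direction to be the only real obstacle, and it is exactly what \cite[Lem.~2]{Cardona:2009} provides. In the paper itself I would simply cite the lemma and state the chain of equivalences.
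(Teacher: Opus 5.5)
Your proposal is correct and is essentially the paper's argument: the statement follows by substituting the equivalence ``tree-child $\iff$ all vertices visible'' from \cite[Lem.~2]{Cardona:2009} into the definition of normality. Your optional self-contained sketch of the visible-implies-tree-child direction is vague as written, and the selection rule ``last vertex on the path whose other parents lie below $v$'' is not well defined. A clean version takes $c$ to be a $\preceq_N$-maximal child of $v$ with $x\preceq_N c$. Then every $\rho p$-path to another parent $p$ of $c$ must pass through $v$, which forces $c\preceq_N p\preceq_N c'$ for some child $c'\neq c$ of $v$ and contradicts maximality. Since you cite the lemma anyway, this does not affect the proof.
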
 

A \emph{set system $\CS$ on $X$} is a non-empty subset of the powerset of $X$. A set system $\CS$ on
$X$ is a \emph{clustering system} if $X\in \CS$, $\emptyset\notin\CS$, and $\{x\}\in \CS$ for all $x\in
X$. Two sets $A,B\in \CS$ \emph{overlap}, in symbols $A\overlaps B$, if $A\cap B \notin \{\emptyset,
A,B\}$. For a network $N$ on $X$, we define  $\CS[N]=\{\CC_N(v)\mid v\in V(N)\}$
as the set system on $X$ that
contains each \emph{cluster} $\CC_N(v) = \{ x \mid x\in X, x\preceq_N v \}$ of $N$. 
For each network $N$, $\CS[N]$ is a clustering system \cite[Lem~14]{Hellmuth:2023}. A clustering system $\CS$ is \emph{realized}
by a network $N$ if $\CS[N]=\CS$.

An \emph{isomorphism} between two networks $N$ and $N'$ on $X$ is a bijection
$\psi\colon V(N)\to V(N')$ such that $(u,v)$ is an arc of $N$ if and only if 
$(\psi(u),\psi(v))$ is an arc of $N'$, and such that $\psi(x)=x$ for every $x\in X$. 
By definition, network isomorphisms are understood to preserve the leaf labels; that is, an isomorphism
between two networks on $X$ restricts to the identity on $X$. Accordingly, uniqueness of a network
on $X$ is always understood up to such an isomorphism.

The \emph{cover digraph} or \emph{Hasse diagram} $\mathscr{H}(\CS)$
of a clustering system $\CS$ is the
DAG with vertex set $\CS$ and arc set containing an arc $A\to B$ for $A,B\in\CS$ if and only if (i)
$B\subsetneq A$ and (ii) there is no $C\in\CS$ with $B\subsetneq C\subsetneq A$. 
We use $\Hasse(\CS)$ for the network on $X$ obtained from this Hasse diagram $\mathscr{H}(\CS)$ by
relabeling each singleton leaf $\{x\}$ by $x$.  
A network $N$ is \emph{regular} if $ N\simeq\Hasse(\CS[N])$ \cite{Baroni:05}. 
By Proposition~2 of \cite{Hellmuth:2023}, every clustering system $\CS$ is realized 
by a unique regular network, namely by $\Hasse(\CS)$. There are, however, many 
non-regular networks that realize the same clustering system; see 
e.g.\ \cite[Fig.~8]{Hellmuth:2023} or \cite[Fig.~2]{LH:25}.

A network $N$ is \emph{semi-regular} if it is shortcut-free and satisfies the following
\emph{path-comparability-condition (PCC)}: For all $u, v \in V(N)$ it holds that $u$ and $v$ are
$\preceq_N$-comparable if and only if $\CC_N(u) \subseteq \CC_N(v)$ or $\CC_N(v) \subseteq
\CC_N(u)$. We note that Theorem~2 of \cite{Hellmuth:2023} shows that semi-regular networks
generalize the class of regular networks. More precisely, regular networks
are exactly those semi-regular networks that do not contain vertices with a single child.

\section{Inclusion-visibility and the NOC property}

In this section, we introduce two properties of set systems that will be used to characterize
normal networks and their clustering systems. The first is closely related to the notion of
\emph{strict-compatibility} introduced by Alcalá et al.~\cite{ALRR:14} for families of clusters.
We formulate the relevant condition as a property of individual members of an arbitrary set system.

\begin{definition}\label{def:inclusion-visible}
	Let $\CS$ be a set system and $C\in\CS$. Then, $C$ is \emph{inclusion-visible (in
	$\CS$)} if there exists some $x\in C$ such that for all $D\in\CS$ with $x\in D$, we have
	$D\subseteq C$ or $C\subseteq D$. In this case, $x$ is an \emph{inclusion-visible witness} for $C$.
\end{definition}

For clustering systems, inclusion-visibility corresponds exactly to the notion of
\emph{strict-compatibility} introduced in \cite{ALRR:14}. More precisely, Alcal{\'a} et al.\ call a
clustering system $\CS$ \emph{strict-compatible} if, for every $C\in\CS$, there exists some $x\in C$
such that every $D\in\CS$ containing $x$ is compatible with $C$, that is, if $C$ and $D$ are
disjoint or one contains the other. Since $x\in C\cap D$, the sets $C$ and $D$ cannot be disjoint.
Hence, compatibility is equivalent in this situation to $D\subseteq C$ or $C\subseteq D$.
Consequently, $\CS$ is strict-compatible if and only if every $C\in\CS$ is inclusion-visible. Thus,
inclusion-visibility may be viewed as the memberwise version of strict-compatibility, extended here
from clustering systems to arbitrary set systems. The following result, which generalizes
\cite[Lem.~4.1]{ALRR:14}, motivates our terminology ``inclusion-visible'': for arbitrary networks,
visibility of a vertex implies inclusion-visibility of its cluster, while for semi-regular networks
the two properties are equivalent.

\begin{lemma}\label{lem:visible-semiregular}
    Let $N$ be a network and let $v\in V(N)$. If $v$ is visible, then
    $\CC_N(v)$ is inclusion-visible in $\CS[N]$. If $N$ is semi-regular, then $v$ is
    visible if and only if $\CC_N(v)$ is inclusion-visible in $\CS[N]$.
\end{lemma}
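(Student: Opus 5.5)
The plan is to prove the first claim directly from the definitions, and to get the converse for semi-regular networks by pushing the set inclusions back into $\preceq_N$-comparability via PCC.

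\emph{First claim.} Assume $v$ is visible, and let $x$ be a leaf such that every $\rho x$-path passes through $v$. Then $x\in\CC_N(v)$, since any $\rho x$-path contains $v$ and therefore $x\preceq_N v$. We show that $x$ is an inclusion-visible witness for $\CC_N(v)$. Let $D=\CC_N(w)$ be a cluster with $x\in D$, so there is a $w x$-path $P$. Choose a $\rho w$-path $Q$; one exists because $\rho$ is the unique root of a finite DAG. The concatenation $QP$ is a $\rho x$-walk, and since $N$ is acyclic it is a path. By visibility it contains $v$.

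There are two cases. If $v$ lies on $Q$, then $w\preceq_N v$, and transitivity of $\preceq_N$ gives $\CC_N(w)\subseteq\CC_N(v)$. If $v$ lies on $P$, then $v\preceq_N w$, and hence $\CC_N(v)\subseteq\CC_N(w)$. In both cases $D$ and $\CC_N(v)$ are $\subseteq$-comparable, so $\CC_N(v)$ is inclusion-visible.

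\emph{Converse for semi-regular $N$.} Assume $\CC_N(v)$ is inclusion-visible with witness $x\in\CC_N(v)$. We claim that every $\rho x$-path passes through $v$. Take such a path $P$ and consider any vertex $w$ on it. Since $x\in\CC_N(w)$, inclusion-visibility gives $\CC_N(w)\subseteq\CC_N(v)$ or the reverse, and PCC then makes $w$ and $v$ $\preceq_N$-comparable.

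Thus every vertex of $P$ is comparable with $v$. We also have $x\preceq_N v\preceq_N\rho$, where $\rho\in P$ and $x\in P$. Moving along $P=(\rho=p_0,p_1,\dots,p_k=x)$, we have $v\preceq_N p_0$ but not $v\preceq_N p_k$ unless $v=x$. The case $v=x$ is trivial, since $x$ lies on $P$. Otherwise let $i$ be the largest index with $v\preceq_N p_i$.

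If $p_i=v$ we are done. Otherwise $v\prec_N p_i$, and $p_{i+1}$ must satisfy $p_{i+1}\prec_N v$, because $p_{i+1}$ is comparable with $v$, is not above it by the choice of $i$, and is not equal to it. So there is a path $p_i\to\cdots\to v\to\cdots\to p_{i+1}$ of length at least $2$, alongside the arc $(p_i,p_{i+1})$. This makes $(p_i,p_{i+1})$ a shortcut, contradicting shortcut-freeness. Hence $v$ lies on $P$, and $v$ is visible.

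\emph{Main obstacle.} The delicate step is the last one: the "crossing" argument, where comparability of every path vertex with $v$ is turned into $v$ actually lying on the path. That is exactly where both halves of semi-regularity are needed, PCC to get comparability and shortcut-freeness to rule out $P$ jumping over $v$. The forward implication of PCC is not needed at all.
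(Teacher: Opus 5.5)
Your proof is correct and follows essentially the same route as the paper. The first part concatenates a $\rho w$-path with a $wx$-path and invokes visibility; the converse uses the nontrivial direction of PCC to make every vertex of an arbitrary $\rho x$-path comparable with $v$, and shortcut-freeness to force $v$ onto that path. Your choice of the largest index $i$ with $v\preceq_N p_i$ just makes explicit the index $j$ with $v_{j+1}\preceq_N v\preceq_N v_j$ whose existence the paper asserts directly.
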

\begin{proof}
	Let $N$ be a network with root $\rho$, let $v\in V(N)$ and put $C=\CC_N(v)$. First assume that
	$v$ is visible.
	Then, there exists some $x\in C$ such that every $\rho x$-path in $N$ contains $v$. Let
	$D\in\CS[N]$ be any cluster with $x\in D$. Then, $D=\CC_N(w)$ for some $w\in V(N)$.
	Since $x\in D$, there is a
	directed path from $w$ to $x$, and hence a $\rho x$-path $P'$ containing $w$, obtained by
	concatenating any $\rho w$-path with any $wx$-path.
	Since every $\rho x$-path in $N$ contains $v$, it holds that $v$ is a vertex of $P'$, and
	hence, $v\preceq_N w$ or $w\preceq_N v$. In either case, we have that either $C\subseteq D$ or
	$D\subseteq C$. Hence, $C$ is inclusion-visible in $\CS[N]$.
	
	Now, assume that $N$ is semi-regular. By the first part, it remains to prove the
	\emph{if-direction} of the second statement.
 	Hence, suppose that $\CC_N(v)$ is inclusion-visible in $\CS[N]$. Then, there exists some
	inclusion-visible witness $x$ for $C$. Thus,
	for all $D\in\CS[N]$ with $x\in D$, we have $D\subseteq C$ or $C\subseteq D$.
	Let $P=v_0v_1\ldots v_k$ be an arbitrary $\rho x$-path in $N$, where $v_0=\rho$ and $v_k=x$.
	Since $x\preceq_N v_i$, we have  $x\in \CC_N(v_i)$ for all $i\in\{0,\ldots,k\}$. 
	By the choice of $x$, it follows that $\CC_N(v_i)\subseteq \CC_N(v)$ or
	$\CC_N(v)\subseteq \CC_N(v_i)$ for all $i\in\{0,\ldots,k\}$. Since $N$ satisfies 
	(PCC), the vertices $v$ and $v_i$ are therefore $\preceq_N$-comparable for all
	$i\in\{0,\ldots,k\}$. The latter together with the fact that $x=v_k\preceq_N v\preceq_N v_0=\rho$
	implies that there is some $j\in\{0,\ldots,k-1\}$ such that $v_{j+1}\preceq_N v\preceq_N v_j$.
	Since $v_{j+1}$ and $v_j$ are adjacent in $N$ and $N$ has no shortcuts, we conclude that
	$v=v_{j+1}$ or $v=v_j$. In either case, $v$ is a vertex of $P$. Since $P$ was an
	arbitrary $\rho x$-path in $N$, we conclude that $v$ is a visible vertex.
\end{proof}

We now introduce an overlap-based property of set systems, which will provide an equivalent
formulation of inclusion-visibility.

\begin{definition}
	Let $\CS$ be a set system. Then, $C\in\CS$  is \emph{not-overlap-covered (NOC)} if
		\[C\not\subseteq\bigcup_{\substack{D\in\CS:\, D\overlaps C}} D.\]
	An element $x\in C\setminus\bigcup_{D\in\CS,D\overlaps C} D$ is a \emph{NOC witness} for $C$. 
	We say that $\CS$ has the \emph{NOC property} if every $C\in\CS$ is NOC.
\end{definition}

By definition, a NOC witness is an element of $C$ that does not occur
in any set overlapping $C$ and thus certifies that $C$ is NOC.
The existence of such witnesses imposes a strong restriction on the size of set systems
with the NOC property.

\begin{lemma}\label{lem:NOC-size}
	Let $\CS$ be a set system on an $n$-element ground set $X$. If $\CS$ has the NOC property, then
	\[|\CS|\leq \frac{n(n+1)}{2}\] and this bound is sharp.
\end{lemma}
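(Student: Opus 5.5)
The plan is to bound, for each fixed $k\in\{1,\dots,n\}$, the number of members of $\CS$ of cardinality $k$ by $n-k+1$, and then sum over $k$. First I would note that $\emptyset\notin\CS$: the empty set is contained in any union, so it cannot be NOC. Hence every member has size between $1$ and $n$. Once the bound $n-k+1$ per size is established, summing $\sum_{k=1}^{n}(n-k+1)$ gives exactly $n(n+1)/2$.

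The key observation concerns a NOC witness $x$ for $C$. Every $D\in\CS$ with $x\in D$ does not overlap $C$, and $D\cap C\ni x$ is non-empty, so $D\subseteq C$ or $C\subseteq D$. (This is the easy direction towards inclusion-visibility.) Now let $C_1,\dots,C_m$ be the distinct members of $\CS$ of size $k$, and fix a NOC witness $x_i$ for each $C_i$. If $x_i\in C_j$ with $j\neq i$, then $C_i$ and $C_j$ would be comparable; since $|C_i|=|C_j|$, this forces $C_i=C_j$, a contradiction. Thus each $x_i$ is a \emph{private} element of $C_i$ among the size-$k$ members. To bound the union $U=C_1\cup\dots\cup C_m\subseteq X$, I would start with $C_1$, which contributes $k$ elements. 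Each further $C_j$ with $j\ge 2$ contributes its private element $x_j$, which lies in no other $C_i$. This gives $n\ge|U|\ge k+(m-1)$, that is, $m\le n-k+1$, as required. The main obstacle is finding this size-layered counting. The more naive argument groups sets by their witness, which yields chains and only gives $n^2$.

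For sharpness, I would take $X=\{1,\dots,n\}$ and, for $1\le k\le n$ and $1\le i\le n-k+1$, the sets $A_{k,i}=\{i\}\cup\{n-k+2,\dots,n\}$. Each $A_{k,i}$ has size $k$, since $i<n-k+2$. Sets of different sizes are distinct, and for a fixed $k$ the sets are distinguished by $i$, so there are $\sum_k(n-k+1)=n(n+1)/2$ of them. To verify NOC, I would show that $i$ is a witness for $A_{k,i}$. Let $A_{k',i'}\ni i$. If $i'=i$, the two sets share $i$ and have nested tails, so they are comparable. If $i'\ne i$, then $i$ lies in the tail of $A_{k',i'}$, so $n-k'+2\le i<n-k+2$. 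Hence the tail of $A_{k,i}$ is contained in that of $A_{k',i'}$, which also contains $i$, and therefore $A_{k,i}\subseteq A_{k',i'}$. In every case there is no overlap, so the bound is attained. (This family even contains $X=A_{n,1}$ and all singletons $A_{1,i}$, so sharpness also holds for clustering systems.)
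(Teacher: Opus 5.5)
Your proof is correct and follows the paper's argument essentially step for step. You layer the system by cardinality, use that a NOC witness of a size-$k$ member lies in no other size-$k$ member to get $n\ge m+k-1$, and sum $\sum_{k=1}^n(n-k+1)$; your explicit remark that $\emptyset$ is never NOC also makes precise a step the paper leaves implicit. Your extremal family $\{i\}\cup\{n-k+2,\dots,n\}$ is the paper's family $\{1,\dots,k-1\}\cup\{j\}$, $k\le j\le n$, under the relabeling $x\mapsto n+1-x$.
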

\begin{proof}
Let $\CS$ be a set system on an $n$-element ground set $X$
with the NOC property. For each $k\in\{1,\ldots,n\}$,
define $\CS[k]=\{S\in\CS\colon |S|=k\}$. We first show that $ |\CS[k]|\leq n-k+1$. To this end,
suppose that $\CS[k]=\{S_1,\ldots,S_m\}$. Since $\CS$ has the NOC property, each $S_i$ has a
NOC witness; choose one and denote it by $x_i$. We claim that $x_i$ is contained in no other member
of $\CS[k]$. Indeed, suppose that $ x_i\in S_j$ for some $j\neq i$. Then $S_i\cap
S_j\neq\emptyset$. Since $S_i$ and $S_j$ are distinct sets of the same cardinality, neither can
contain the other. Hence $S_i\overlaps S_j$; contradicting the fact that $x_i$ is a NOC witness for
$S_i$. Thus, every member $S_i$ of $\CS[k]$ has an element $x_i$ that occurs in no other member of
$\CS[k]$. In particular, the witnesses $x_1,\ldots,x_m$ are pairwise distinct.

Now fix one set, say $S_1$. It consists of its witness $x_1$ together with $k-1$ further elements.
None of these $k-1$ elements can be one of the witnesses $x_2,\ldots,x_m$, since $x_i$ occurs only
in $S_i$ among the members of $\CS[k]$. Consequently, the ground set $X$ contains at least the $m$
distinct witnesses $x_1,\ldots,x_m$ together with the $k-1$ remaining elements of $S_1$. Hence
$n\geq m+k-1$ and therefore $ |\CS[k]|=m\leq n-k+1. $
Summing over all possible cardinalities gives $ |\CS| = \sum_{k=1}^n |\CS[k]| \leq
\sum_{k=1}^n(n-k+1) = n+(n-1)+\cdots+1 =\frac{n(n+1)}{2} $. 

To see that the bound is sharp, consider
$\CS=\bigl\{\{1,\ldots,k-1\}\cup\{j\}:1\leq k\leq j\leq n\bigr\}$.
For each such set, $j$ is a NOC witness, and
$|\CS|=\sum_{k=1}^n(n-k+1)=\frac{n(n+1)}{2}$.
\end{proof}

As we shall see, inclusion-visibility and the NOC property are equivalent.
Together with the observation above, this also provides an alternative characterization of strict-compatibility for
clustering systems.

\begin{lemma}\label{lem:clusters-visible-noc}
	Let $\CS$ be a set system and let $C\in\CS$. Then, $C$ is inclusion-visible if and only if
	$C$ is NOC. In particular, $x$ is an inclusion-visible witness for $C$ if and only if $x$
	is a NOC witness for $C$.
\end{lemma}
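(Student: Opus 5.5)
The plan is to prove the stronger witness-level statement directly: for a fixed $C\in\CS$ and $x\in C$, I will show that $x$ is an inclusion-visible witness for $C$ if and only if $x$ is a NOC witness for $C$. Once this is established, the equivalence between $C$ being inclusion-visible and $C$ being NOC follows immediately, since each property is defined as the existence of a witness of the corresponding type. No earlier lemma is needed beyond the definitions; the argument is a case distinction on how a set $D\in\CS$ containing $x$ relates to $C$.

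First, suppose $x$ is an inclusion-visible witness for $C$. I will show that $x\notin D$ for every $D\in\CS$ with $D\overlaps C$. Assume to the contrary that $x\in D$ for some $D\overlaps C$. Then inclusion-visibility gives $D\subseteq C$ or $C\subseteq D$. Hence $C\cap D\in\{C,D\}$, which contradicts $D\overlaps C$. Therefore $x\in C\setminus\bigcup_{D\overlaps C}D$, so $x$ is a NOC witness.

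Conversely, suppose $x$ is a NOC witness for $C$, and let $D\in\CS$ with $x\in D$. Since $x\in C\cap D$, we have $C\cap D\neq\emptyset$. Since $x$ lies in no set overlapping $C$, $D$ does not overlap $C$, so $C\cap D\in\{\emptyset,C,D\}$. The only remaining options are $C\cap D=C$ or $C\cap D=D$, that is, $C\subseteq D$ or $D\subseteq C$. Thus $x$ is an inclusion-visible witness.

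There is no real obstacle here. The only point needing care is in the second direction: the non-emptiness of $C\cap D$, guaranteed by $x$, is what excludes the disjoint case of non-overlap. This mirrors the remark after Definition~\ref{def:inclusion-visible} about compatibility and strict-compatibility.
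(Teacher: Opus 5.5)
Your proof is correct and follows essentially the same approach as the paper: contraposition for the forward direction, and for the converse the observation that a set $D$ containing $x$ cannot overlap $C$, so $C\cap D\neq\emptyset$ forces $D\subseteq C$ or $C\subseteq D$. The only difference is presentation. You prove the witness-level equivalence first and derive the set-level statement from it, whereas the paper argues at the set level and then notes that the same arguments give the witness correspondence. You also make the non-emptiness step explicit, which the paper leaves implicit.
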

\begin{proof}
	Let $\CS$ be a set system and $C\in\CS$. First assume that $C$ is inclusion-visible. Let
	$x\in C$ be such that for all $D\in\CS$ with $x\in D$, we have $D\subseteq C$ or $C\subseteq D$.
	By contraposition, if $D\in\CS$ satisfies $D\overlaps C$, then $x\notin D$. Consequently, $C$ is NOC.

	Conversely, assume $C$ is NOC. Then, there exists some element 
	$x\in C\setminus (\cup_{D\overlaps C} D)$. Thus, $x\notin D$ for all
	$D\in\CS$ with $D\overlaps C$. Consequently, every $D\in\CS$ containing $x$ is comparable with
	$C$ under inclusion, that is, $D\subseteq C$ or $C\subseteq D$. Hence, $C$ is inclusion-visible.
	
	The arguments used here, in particular, imply the last statement of the lemma.
\end{proof}

\section{Characterizations and algorithmic consequences}

We first establish a new characterization of normal networks.

\begin{theorem}\label{thm:normal-characterization}
Let $N$ be a network. Then, the following statements are equivalent:
\begin{enumerate}[(1)]
\item $N$ is normal.
\item $N$ is semi-regular and every cluster in $\CS[N]$ is inclusion-visible.
\item $N$ is semi-regular and $\CS[N]$ has the NOC property.
\end{enumerate}
\end{theorem}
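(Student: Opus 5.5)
The plan is to prove (1)$\Rightarrow$(2), (2)$\Rightarrow$(1), and (2)$\Leftrightarrow$(3). The equivalence (2)$\Leftrightarrow$(3) is immediate from Lemma~\ref{lem:clusters-visible-noc}: a cluster is inclusion-visible if and only if it is NOC. So every cluster of $\CS[N]$ is inclusion-visible exactly when $\CS[N]$ has the NOC property, and the semi-regularity assumption is common to both statements.

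For (2)$\Rightarrow$(1), suppose $N$ is semi-regular and all clusters are inclusion-visible. By the second statement of Lemma~\ref{lem:visible-semiregular}, every vertex $v$ is visible, since $\CC_N(v)$ is inclusion-visible. Semi-regular networks are shortcut-free by definition. Hence Theorem~\ref{thm:char-normal-vis} yields that $N$ is normal.

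For (1)$\Rightarrow$(2), let $N$ be normal. By Theorem~\ref{thm:char-normal-vis}, $N$ is shortcut-free and every vertex is visible. The first statement of Lemma~\ref{lem:visible-semiregular} then gives that every cluster is inclusion-visible. It remains to show that $N$ satisfies (PCC), which is the main obstacle.

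\emph{Forward direction of (PCC).} If $u \preceq_N v$, then $\CC_N(u)\subseteq \CC_N(v)$, since every leaf below $u$ is below $v$.

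\emph{Backward direction of (PCC).} Suppose $\CC_N(u)\subseteq\CC_N(v)$. I would use visibility of $u$: pick a leaf $x$ such that every $\rho x$-path passes through $u$. Then $x\in\CC_N(u)\subseteq \CC_N(v)$, so there is a $\rho x$-path through $v$, obtained by concatenating a $\rho v$-path with a $vx$-path. This path also contains $u$, so $u$ and $v$ are $\preceq_N$-comparable, as required.

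One might worry that comparability should come out in the "right" orientation, but (PCC) as defined only asks for comparability, so no orientation argument is needed. Note that distinct vertices may share a cluster, for instance a vertex with a single child; this is harmless, because the argument above applies in both directions.
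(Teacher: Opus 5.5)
Your proof is correct and follows the paper's proof step for step. The only difference is in showing that a normal network is semi-regular. The paper cites Corollary~14 of \cite{Hellmuth:2023} for this, while you verify (PCC) directly: you take a visibility witness $x$ for the vertex $u$ with the smaller cluster and route a $\rho x$-path through $v$. That argument is valid, and it is the same path-concatenation idea as in the first half of Lemma~\ref{lem:visible-semiregular}, so it makes the proof self-contained.
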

\begin{proof}
Let $N$ be a network. First assume that $N$ is normal. By Corollary~14 of \cite{Hellmuth:2023},
$N$ is semi-regular. Moreover, Theorem~\ref{thm:char-normal-vis} implies that every vertex in
$N$ is visible. Hence, by Lemma~\ref{lem:visible-semiregular}, every
cluster in $\CS[N]$ is inclusion-visible. Thus, (1) implies (2).
Conversely, assume that (2) holds. By Lemma~\ref{lem:visible-semiregular},
every vertex in $N$ is visible. Since $N$ is semi-regular, it is
also shortcut-free. Therefore, Theorem~\ref{thm:char-normal-vis} implies that $N$ is normal.
Hence, (2) implies (1). Finally, Statements (2) and (3) are equivalent by
Lemma~\ref{lem:clusters-visible-noc}.
\end{proof}

A characterization of tree-child realizability in terms of strict-compatibility was established in
\cite[Thm.~4.6]{ALRR:14}. In view of the equivalence between strict-compatibility and
inclusion-visibility discussed above, this already characterizes the corresponding clustering systems
in terms of inclusion-visibility. Combining this with the NOC formulation and
Theorem~\ref{thm:normal-characterization} yields the following characterization in our setting.

\begin{corollary}\label{cor:NOC-char}
    Let $\CS$ be a clustering system. Then, the following statements are equivalent: 
    \begin{enumerate}[(1)]
        \item Every cluster in $\CS$ is inclusion-visible.
        \item $\CS$ has the NOC property.
        \item There exists a normal network $N$ such that $\CS[N]=\CS$.
        \item There exists a tree-child network $N$ such that $\CS[N]=\CS$.
    \end{enumerate}
    In particular, if (1) or (2) holds, then 
   $N\coloneqq\Hasse(\CS)$ is a normal network with $\CS[N]=\CS$.
\end{corollary}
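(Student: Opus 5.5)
The equivalence (1)$\Leftrightarrow$(2) follows directly from Lemma~\ref{lem:clusters-visible-noc}, applied to every $C\in\CS$. For the remaining implications I plan the cycle (2)$\Rightarrow$(3)$\Rightarrow$(4)$\Rightarrow$(1). Here (3)$\Rightarrow$(4) is trivial, since every normal network is tree-child by definition. For (4)$\Rightarrow$(1), let $N$ be tree-child with $\CS[N]=\CS$. By \cite[Lem.~2]{Cardona:2009} every vertex of $N$ is visible, so the first (unconditional) part of Lemma~\ref{lem:visible-semiregular} shows that every $\CC_N(v)$ is inclusion-visible in $\CS[N]=\CS$. This gives (1). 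Note that no semi-regularity is needed in this direction.

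The substantive step is (2)$\Rightarrow$(3), which I will prove via the stronger claim that $N\coloneqq\Hasse(\CS)$ is normal. This also yields the final sentence of the corollary. By Proposition~2 of \cite{Hellmuth:2023}, $\Hasse(\CS)$ is a network realizing $\CS$, and it is regular. By Theorem~2 of \cite{Hellmuth:2023}, regular networks are semi-regular. Since $\CS[N]=\CS$ has the NOC property, Theorem~\ref{thm:normal-characterization}, (3)$\Rightarrow$(1), then gives that $N$ is normal.

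I expect the only delicate point to be the bookkeeping around the Hasse diagram. We must make sure that $\Hasse(\CS)$ is genuinely a network on $X$: a unique root $X$, leaves exactly the singletons (relabeled by $x$), and clusters $\CC_N(A)=A$. The last property holds because every $A\in\CS$ reaches exactly the singletons $\{x\}$ with $x\in A$ through a chain of covers. All of this is what Proposition~2 of \cite{Hellmuth:2023} provides, so I would cite it rather than reprove it.

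If one wanted to avoid the semi-regularity citation, one could instead check the definition directly. Shortcut-freeness holds because Hasse arcs are cover relations. The (PCC) holds because in $\Hasse(\CS)$ we have $B\preceq A$ iff $B\subseteq A$, which follows from chains of covers in a finite poset. Visibility of each vertex $C$ would then come from its NOC witness $x$: every root-to-$\{x\}$ path is a maximal chain of sets containing $x$, all of which are comparable with $C$, so the chain must pass through $C$. Citing the earlier results is cleaner, however.
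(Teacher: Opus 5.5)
Your proposal is correct and follows essentially the same route as the paper: (1)$\Leftrightarrow$(2) by Lemma~\ref{lem:clusters-visible-noc}; normality of $\Hasse(\CS)$ because it is regular, hence semi-regular, combined with Theorem~\ref{thm:normal-characterization}; the trivial step (3)$\Rightarrow$(4); and (4)$\Rightarrow$(1) via visibility in tree-child networks together with the unconditional part of Lemma~\ref{lem:visible-semiregular}. The only cosmetic difference is that you apply Theorem~\ref{thm:normal-characterization} directly in its form (3)$\Rightarrow$(1), whereas the paper first turns inclusion-visibility into vertex visibility in $\Hasse(\CS)$ and only then invokes the theorem.
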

\begin{proof}
  Let $\CS$ be a clustering system. The equivalence of Statements (1) and (2) is given by
  Lemma~\ref{lem:clusters-visible-noc}. Assume that (1) holds. By Proposition~2 in 
  \cite{Hellmuth:2023}, there exists a unique regular network $N$ such that $\CS[N]=\CS$. 
  In particular, as shown in the proof of Proposition~2 in \cite{Hellmuth:2023}, 
  this network is given by the Hasse diagram $N\coloneqq\Hasse(\CS)$. 
  Since every cluster in $\CS$ is inclusion-visible, Lemma~\ref{lem:visible-semiregular} 
  implies that every vertex in $N$ is visible. Since $N$  is regular, it is in particular 
  semi-regular (cf.\ \cite[Thm.~2]{Hellmuth:2023}). The latter two conditions together with 
  Theorem~\ref{thm:normal-characterization} imply that $N$ is normal. Hence, (1) implies (3).
 
  Since every normal network is tree-child, Statement (3) immediately implies (4).
  Finally, assume that (4) holds, and let $N$ be a tree-child network with $\CS[N]=\CS$. As observed
  in \cite[Lem.~2]{Cardona:2009}, every vertex of a tree-child network is visible.  Hence, by
  Lemma~\ref{lem:visible-semiregular}, every cluster in $\CS[N]=\CS$ is inclusion-visible. Thus, (4)
  implies (1).
\end{proof}

The existence of a normal realization is also implicit in \cite{ALRR:14}: for a
strict-compatible clustering system, the associated cluster network is tree-child
\cite[Thm.~4.6]{ALRR:14} and shortcut-free \cite[Rem.~3.5]{ALRR:14} (therein called ``non-redundant''), 
hence normal in our terminology. The new aspects here are the NOC formulation for arbitrary set systems and its
connection with the canonical regular realization $\Hasse(\CS)$.
As a further consequence, we obtain the following sharp
quadratic bound on the number of distinct clusters of such networks.

\begin{corollary}
	Let $N$ be a tree-child or normal network on an $n$-element leaf set $X$. 
	Then $|\CS[N]|\leq \frac{n(n+1)}{2}$. 
	If, in addition, $N$ is regular, then $|V(N)|\leq \frac{n(n+1)}{2}$.
	In both cases, the bound is sharp.
\end{corollary}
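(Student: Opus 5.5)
The plan is to reduce the first bound to Lemma~\ref{lem:NOC-size} via Corollary~\ref{cor:NOC-char}, obtain the vertex bound from regularity, and exhibit one explicit clustering system for sharpness.

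\textbf{Upper bound on clusters.} Let $N$ be tree-child or normal on $X$ with $|X|=n$. Every normal network is tree-child, so it suffices to treat the tree-child case. By \cite[Lem~14]{Hellmuth:2023}, $\CS[N]$ is a clustering system on $X$, and it is realized by the tree-child network $N$. Statement (4) of Corollary~\ref{cor:NOC-char} therefore holds, and hence so does (2): $\CS[N]$ has the NOC property. Lemma~\ref{lem:NOC-size} then gives $|\CS[N]|\le n(n+1)/2$.

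\textbf{Upper bound on vertices.} If, in addition, $N$ is regular, then $N\simeq\Hasse(\CS[N])$. The vertex set of $\Hasse(\CS[N])$ is $\CS[N]$ itself, with singletons relabeled as leaves. Hence $|V(N)|=|\CS[N]|\le n(n+1)/2$.

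\textbf{Sharpness.} Take the system from the proof of Lemma~\ref{lem:NOC-size},
\[
\CS=\bigl\{\{1,\ldots,k-1\}\cup\{j\}:1\le k\le j\le n\bigr\}.
\]
\begin{itemize}
\item It is a clustering system. Setting $k=1$ gives every singleton $\{j\}$. Setting $k=j=n$ gives $X$. The empty set is never produced.
\item It has the NOC property and $|\CS|=n(n+1)/2$, both already shown in the proof of Lemma~\ref{lem:NOC-size}.
\end{itemize}
By Corollary~\ref{cor:NOC-char}, $N\coloneqq\Hasse(\CS)$ is a normal, hence tree-child, network with $\CS[N]=\CS$. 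It is also regular, since it is the unique regular realization of $\CS$ by \cite[Prop.~2]{Hellmuth:2023}. This single network $N$ therefore attains both bounds at once: $|\CS[N]|=|V(N)|=n(n+1)/2$.

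The only point needing care is confirming that the sharpness example is a genuine clustering system, in particular that it contains all singletons and $X$. That check is the one done in the first item of the sharpness step.
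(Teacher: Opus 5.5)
Your proof is correct and follows essentially the same route as the paper. Both get the NOC property from Corollary~\ref{cor:NOC-char}, the cluster bound from Lemma~\ref{lem:NOC-size}, the vertex bound from $N\simeq\Hasse(\CS[N])$, and sharpness from the extremal family of Lemma~\ref{lem:NOC-size} realized by the regular normal network $\Hasse(\CS)$; your explicit check that this family is a clustering system fills in a detail the paper only asserts.
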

\begin{proof}
By Corollary~\ref{cor:NOC-char}, $\CS[N]$ has the NOC property. Hence,
Lemma~\ref{lem:NOC-size} gives
$|\CS[N]|\leq \frac{n(n+1)}2$.
If $N$ is regular, we have $N\simeq \Hasse(\CS[N])$ and it follows that the 
vertices of $N$ are in bijection with the clusters in $\CS[N]$.
Therefore $|V(N)|=|\CS[N]|$ and the second assertion follows.

In both cases, sharpness follows from Lemma~\ref{lem:NOC-size} together with
Corollary~\ref{cor:NOC-char}, since the extremal set system constructed in the proof of
Lemma~\ref{lem:NOC-size} is a clustering system and is realized by the regular normal network
$\Hasse(\CS)$.
\end{proof}

Polynomial-time recognition of clustering systems realizable by tree-child networks was already
obtained in \cite{ALRR:14} via the construction of the associated cluster network. The NOC
formulation yields a direct recognition algorithm that avoids constructing a network and gives the
following slightly improved running-time bound.

\begin{corollary}\label{cor:noc-polytime}
Let $\CS$ be a clustering system on an $n$-element ground set $X$. Then it can be decided in 
$ O(\min\{|\CS|^2n,\; n^5\}) $
time whether $\CS$ has the NOC property. Consequently, it can be decided in polynomial time
whether there exists a normal network $N$ such that $\CS[N]=\CS$. If such a network exists, a normal
network realizing $\CS$ can be constructed in polynomial time.
\end{corollary}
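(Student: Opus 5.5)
The plan is to test the NOC condition directly from its definition, memberwise, and to obtain the second term of the minimum from the size bound in Lemma~\ref{lem:NOC-size}. I assume each member of $\CS$ is stored as a bit-vector of length $n$, so membership tests take $O(1)$ and intersections take $O(n)$.

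\textbf{Direct algorithm.} For each $C\in\CS$:
\begin{enumerate}[(i)]
\item Initialize a Boolean array $U_C$ over $X$ to all false.
\item For every $D\in\CS$, compute $|C\cap D|$, $|C|$ and $|D|$ in $O(n)$ time. Then $D\overlaps C$ holds exactly when $0<|C\cap D|<\min\{|C|,|D|\}$.
\item In that case, mark every $x\in D$ in $U_C$, which costs $O(n)$.
\item Finally, check in $O(n)$ whether some $x\in C$ remains unmarked.
\end{enumerate}
By definition, $C$ is NOC if and only if such an $x$ exists. $\CS$ has the NOC property if and only if every $C$ passes. Each pair $(C,D)$ costs $O(n)$, so the total is $O(|\CS|^2 n)$.

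\textbf{Obtaining the $n^5$ term.} This is the only subtle point, since $|\CS|$ may be exponential in $n$. Before running the test, scan the members of $\CS$ and stop as soon as more than $n(n+1)/2$ distinct sets have been seen. If $|\CS|>n(n+1)/2$, Lemma~\ref{lem:NOC-size} shows that $\CS$ cannot have the NOC property, and we reject immediately. This pre-check only needs to read $O(n^2)$ sets of size $n$, so it takes $O(n^3)$ time.

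Otherwise $|\CS|\le n(n+1)/2 = O(n^2)$, and the direct algorithm runs in $O(n^4\cdot n)=O(n^5)$. Together the two cases give $O(\min\{|\CS|^2n,n^5\})$. The counting step needs a little care: I assume the input lists distinct sets, as it does for a set system. If duplicates were possible, one would first deduplicate the first $n(n+1)/2+1$ sets, for example by radix sorting the bit-vectors.

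\textbf{Realizability and construction.} By Corollary~\ref{cor:NOC-char}, a normal network $N$ with $\CS[N]=\CS$ exists if and only if $\CS$ has the NOC property, so the decision procedure above settles realizability in polynomial time. In that case, Corollary~\ref{cor:NOC-char} also states that $\Hasse(\CS)$ is such a network.

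The network $\Hasse(\CS)$ can be built in polynomial time as follows:
\begin{enumerate}[(i)]
\item For each ordered pair $(A,B)$, decide whether $B\subsetneq A$ in $O(n)$.
\item For each such pair, check that no $C\in\CS$ satisfies $B\subsetneq C\subsetneq A$. Doing this naively over all triples costs $O(|\CS|^3n)$, which is polynomial because $|\CS|=O(n^2)$ here.
\item Relabel each singleton $\{x\}$ as $x$.
\end{enumerate}
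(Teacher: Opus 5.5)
Your proposal is correct and follows essentially the same route as the paper. Both decide NOC memberwise by accumulating the union of the overlapping sets in $O(|\CS|^2n)$ time, and both obtain the $n^5$ term by rejecting early via Lemma~\ref{lem:NOC-size} when $|\CS|>n(n+1)/2$. Both then use Corollary~\ref{cor:NOC-char} for realizability and $\Hasse(\CS)$ as the polynomial-time construction; your cardinality-based overlap test, the bounded scan for the size pre-check, and the explicit $O(|\CS|^3n)$ bound for the Hasse diagram simply spell out details the paper states more briefly.
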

\begin{proof}
Let $m=|\CS|$. We assume that the elements of $X$ are ordered and that every set $C\in\CS$ is
represented by its characteristic vector in $\{0,1\}^n$. For each $C\in\CS$, compute
$U_C=\bigcup_{{D\in\CS,D\overlaps C}} D$. By definition, $C$ is NOC if and only if $C\setminus
U_C\neq\emptyset$. Thus, $\CS$ has the NOC property if and only if this condition holds for every
$C\in\CS$. For each ordered pair $(C,D)\in\CS\times\CS$, whether $C$ and $D$ overlap can be decided
in $O(n)$ time by a single scan through their characteristic vectors. 
At the same time, the sets $D$ overlapping $C$ can be
added to the union $U_C$. Hence, all sets $U_C$ and the corresponding tests $C\setminus
U_C\neq\emptyset$ can be computed in $O(m^2n)$ time. On the other hand, by
Lemma~\ref{lem:NOC-size}, every NOC set system on an $n$-element ground set contains at most $
\frac{n(n+1)}2 $ members. Hence, if $m>\frac{n(n+1)}2$, we may immediately conclude that $\CS$ is
not NOC. Otherwise, $m=O(n^2)$, and the above algorithm runs in $O(n^5)$ time. Therefore, the NOC
property can be recognized in $ O(\min\{m^2n,n^5\}) = O(\min\{|\CS|^2n,n^5\}) $ time.

By Corollary~\ref{cor:NOC-char}, $\CS$ is the clustering system of a normal network
if and only if $\CS$ has the NOC property. Moreover, whenever this is the case, the
Hasse diagram $\Hasse(\CS)$ is a normal network realizing $\CS$. Since $\Hasse(\CS)$ is obtained by
testing, for each pair $A,B\in\CS$, whether $B\subsetneq A$ and whether there exists some $C\in\CS$
with $B\subsetneq C\subsetneq A$, it can be constructed in polynomial time.
\end{proof}

We conclude this section by recording two bijective correspondences that follow from the
NOC characterization together with known uniqueness results for regular and separated
phylogenetic networks. These correspondences are closely related to the bijection between
cluster networks and regular networks established in \cite[Thm.~3.8]{ALRR:14}. To state them, we
recall three standard classes of networks. A network is \emph{phylogenetic} if no vertex has
exactly one child and at most one parent, and it is \emph{strong-phylogenetic} if no vertex has
exactly one child. Moreover, a network is \emph{separated} if every vertex with more than one parent has exactly one child.

\begin{corollary}\label{cor:bijection-normal-noc}
Let $X$ be a finite set, and let
\[
\mathsf{NOC}(X) =
\{\CS \mid \CS \text{ is a clustering system on } X \text{ satisfying the NOC property}\}.
\]
Then, the map $ N\mapsto \CS[N] $
induces a bijection from each of the following two classes, considered up to isomorphism, onto $\mathsf{NOC}(X)$:
\begin{enumerate}[(1)]
\item Strong-phylogenetic normal networks on $X$.
\item Phylogenetic  and separated normal networks on $X$.
\end{enumerate}
Consequently, there is a canonical bijection between the classes in (1) and (2): a network $N$ in
(1) is mapped to the unique network $N'$ in (2) satisfying $ \CS[N']=\CS[N]$.
\end{corollary}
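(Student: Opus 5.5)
The plan is to reduce both bijections to known uniqueness results for regular and separated networks, using Corollary~\ref{cor:NOC-char} to pass between NOC and normal realizability. In both cases the map $N\mapsto\CS[N]$ lands in $\mathsf{NOC}(X)$: a normal network is tree-child, so $\CS[N]$ has the NOC property by Corollary~\ref{cor:NOC-char}. What remains is, for each class, surjectivity and injectivity up to isomorphism.

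For class (1), I will use that strong-phylogenetic normal networks are exactly the regular normal networks. A normal network is semi-regular by \cite[Cor.~14]{Hellmuth:2023}. Theorem~2 of \cite{Hellmuth:2023} says the regular networks are precisely the semi-regular networks without vertices having a single child, which are the strong-phylogenetic ones. Injectivity then follows, since a regular $N$ satisfies $N\simeq\Hasse(\CS[N])$, so $\CS[N]=\CS[N']$ forces $N\simeq N'$. For surjectivity, take $\CS\in\mathsf{NOC}(X)$. Corollary~\ref{cor:NOC-char} gives that $\Hasse(\CS)$ is normal and realizes $\CS$, and it is regular, hence strong-phylogenetic.

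For class (2), I will invoke the known uniqueness result for separated phylogenetic networks from \cite{Hellmuth:2023}: every clustering system is realized by a unique phylogenetic separated network, up to isomorphism, that is semi-regular or satisfies a similar canonical condition. I plan to obtain it explicitly from $\Hasse(\CS)$ by the standard expansion. Each vertex $v$ with at least two parents is replaced by a new vertex $v'$ that receives all incoming arcs of $v$, together with a single arc $(v',v)$.

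I will then check three things for this expansion. It preserves clusters, since $\CC(v')=\CC(v)$. It keeps the network phylogenetic, since $v'$ has exactly one child but at least two parents. It keeps the network normal. For shortcut-freeness, any alternative path into $v'$ would come from an alternative path into $v$ in $\Hasse(\CS)$. For the tree-child property, each parent $u$ of $v'$ had a tree-child $w$ in $\Hasse(\CS)$, which is not $v$ since $v$ is a reticulation, so $w$ survives as a tree-child of $u$. The vertex $v'$ itself has the tree-child $v$. This gives surjectivity onto $\mathsf{NOC}(X)$ for class (2).

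Injectivity for class (2) is the main obstacle: I must show that any two phylogenetic separated normal networks with the same clustering system are isomorphic. The strategy is to contract each arc $(v',v)$ where $v'$ is a reticulation with its single child $v$. Normality and separatedness guarantee that $v$ is then a tree vertex. The contraction should recover a strong-phylogenetic normal network with the same clusters, which by the class (1) result is $\Hasse(\CS)$.

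The delicate point in this step is proving that the only vertices with a single child are such reticulations, and that after contraction no vertex has a single child. The phylogenetic condition gives the first claim: a vertex with one child must have at least two parents. Shortcut-freeness prevents two contracted vertices from coalescing. Since the expansion is determined by $\Hasse(\CS)$, the two networks coincide up to isomorphism. If a precise uniqueness theorem for separated phylogenetic networks is available in \cite{Hellmuth:2023}, I would cite it directly instead. The canonical bijection between classes (1) and (2) then follows by composing one bijection with the inverse of the other.
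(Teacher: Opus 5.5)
Your proof is correct, but it takes a different route from the paper, most visibly for class (2).

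\textbf{Class (1).} The paper cites \cite[Thm.~3.14]{HLM:26} for ``normal networks are regular iff strong-phylogenetic''. You derive this instead from \cite[Cor.~14]{Hellmuth:2023} (normal implies semi-regular) and \cite[Thm.~2]{Hellmuth:2023} (regular equals semi-regular without single-child vertices). This is a self-contained justification that uses only results the paper already relies on.

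\textbf{Class (2).} The paper argues by citation. Surjectivity comes from the unique semi-regular, phylogenetic, separated realization of \cite[Thm.~6]{Hellmuth:2023}, which is normal by Theorem~\ref{thm:normal-characterization}. Injectivity comes from \cite[Prop.~8]{Hellmuth:2023}. You instead construct the class-(2) network explicitly by expanding every reticulation of $\Hasse(\CS)$. You then prove uniqueness by contracting the reticulation arcs, which reduces the question to uniqueness of regular realizations. This costs more verification, but it avoids both external results and makes the canonical bijection between (1) and (2) concrete: it is exactly expansion and contraction of reticulations.

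\textbf{What remains to be checked.} Two steps you only asserted should be written out.
\begin{enumerate}
\item The contracted network $M$ is normal. The contracted arcs $(v',v)$ are pairwise vertex-disjoint, because $v$ has $v'$ as its only parent and no reticulation is a parent of a reticulation; both facts follow from separatedness plus tree-child. Hence every tree-child survives as a tree-child in $M$. Moreover, an arc into a merged vertex $v$ that is a shortcut in $M$ lifts to a shortcut into $v'$ in $N$.
\item Re-expanding $M$ returns $N$. This holds because the vertices of in-degree at least two in $M$ are exactly the merged ones.
\end{enumerate}
What prevents a vertex from having both $v'$ and $v$ as children (your ``coalescing'') is most directly that $v'$ is the only parent of $v$. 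Shortcut-freeness also rules it out, but it is not the primary reason.
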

\begin{proof}
By Theorem~\ref{thm:normal-characterization}, the clustering system of every normal network
satisfies the NOC property. Hence, for each of the classes in (1) and (2), the map $ N\mapsto
\CS[N] $ indeed takes values in $\mathsf{NOC}(X)$.

We first consider the class in (1). Let $\CS\in\mathsf{NOC}(X)$. By
Corollary~\ref{cor:NOC-char}, there is a normal network realizing $\CS$, namely
$N\coloneqq\Hasse(\CS)$. Clearly, $N$ is regular. For normal networks,
regularity is equivalent to being strong-phylogenetic \cite[Thm.~3.14]{HLM:26}, and therefore $N$ is
a strong-phylogenetic normal network. Thus, the map $N\mapsto\CS[N]$ is surjective onto
$\mathsf{NOC}(X)$. Conversely, suppose that $N_1$ and $N_2$ are strong-phylogenetic normal networks
on $X$ with $\CS[N_1]=\CS[N_2]=\CS$. Since both $N_1$ and $N_2$ are normal and strong-phylogenetic,
Theorem~3.14 of \cite{HLM:26} implies that both networks are regular.
Proposition~2 of \cite{Hellmuth:2023} states that a clustering system admits, up to isomorphism, a
unique regular realization. Since both $N_1$ and $N_2$ are regular realizations of the same
clustering system $\CS$, they are isomorphic. Hence, the map $N\mapsto\CS[N]$ is injective on the
class in (1).

We now consider the class in (2). By Theorem~6 of \cite{Hellmuth:2023}, every clustering system
$\CS$ on $X$ is realized by a unique semi-regular, phylogenetic, and separated network. Let
$\CS\in\mathsf{NOC}(X)$, and let $N$ denote this unique realization. Since $N$ is semi-regular and
$\CS[N]=\CS$ has the NOC property, Theorem~\ref{thm:normal-characterization} implies that $N$ is
normal. Consequently, $N$ is a phylogenetic and separated normal network, that is, a network in the
class in (2). Hence, the map $N\mapsto\CS[N]$ is surjective onto $\mathsf{NOC}(X)$. Conversely,
suppose that $N_1$ and $N_2$ are phylogenetic and separated normal networks on $X$ with $
\CS[N_1]=\CS[N_2]=\CS$. Proposition~8 of \cite{Hellmuth:2023} states that if a clustering system
admits a realization by a phylogenetic and separated normal network, then that normal network is
unique up to isomorphism. In other words, $N_1$ and $N_2$ are necessarily isomorphic. Hence, the map
$N\mapsto\CS[N]$ is injective on the class in (2).

Thus, for each of the classes in (1) and (2), the map $N\mapsto\CS[N]$ induces a bijection onto
$\mathsf{NOC}(X)$.
\end{proof}

\section{Concluding Remarks and Outlook}
We have characterized normal networks in terms of inclusion-visibility and the NOC property,
and used this viewpoint to revisit the clustering systems realized by normal and tree-child
networks. For clustering systems, inclusion-visibility coincides with the strict-compatibility
condition of \cite{ALRR:14}, while the equivalent NOC formulation provides an overlap-based
perspective that applies to arbitrary set systems. This perspective yields, among other
consequences, the sharp quadratic bound and the direct recognition algorithm established above.

The bijections in Corollary~\ref{cor:bijection-normal-noc} also provide a new perspective on the
enumeration of normal networks.  To be more precise, let $q_n$ denote the number of clustering
systems on an $n$-element set $X$ satisfying the NOC property. Then
Corollary~\ref{cor:bijection-normal-noc} implies that $q_n$ is precisely both the number of
strong-phylogenetic normal networks on $X$ and the number of phylogenetic and separated normal
networks on $X$, in each case up to isomorphism.
This correspondence can be refined by taking the number of reticulation vertices into account.
Let $q_{n,k}$ denote the number of (NOC) clustering systems $\CS$ on an $n$-element set $X$ for
which $\Hasse(\CS)$ has exactly $k$ reticulation vertices, that is, vertices with at least two
parents. Since $\Hasse(\CS)$ is the unique regular realization of $\CS$, the number $q_{n,k}$ is
precisely the number of strong-phylogenetic normal networks on $X$ with exactly $k$ reticulation
vertices. The enumeration of phylogenetic networks, in particular tree-child and normal networks, 
has received considerable attention in recent years, both concerning exact enumeration and asymptotic
behavior; see, for example,
\cite{McDiarmid:2015,Fuchs:2019,Fuchs:2021A,Fuchs:2021B,Pons:2021}.
It would therefore be interesting to investigate whether the set-system characterization developed
here can be used to obtain recurrences or explicit formulas for $q_n$ or $q_{n,k}$, and to understand the connections to the recently resolved Pons-Batle identity \cite{Pons:2021,Liu:2026,Yu:2026}.

Inclusion-visibility also admits a natural order-theoretic interpretation. Indeed, a cluster
$C\in\CS$ is inclusion-visible if and only if there exists some $x\in C$ such that $C$ is
comparable, with respect to inclusion, with every member of $\CS|_x=\{D\in\CS\mid x\in D\}$. This
suggests a corresponding notion for an arbitrary poset $(P,\leq)$: an element $y\in P$ may be called
\emph{visible} if there exists a minimal element $x\leq y$ such that $y$ is comparable with every
element of the up-set $x^\uparrow=\{z\in P\mid x\leq z\}$. It may be interesting to investigate
whether this notion yields useful structural consequences for particular classes of posets or
lattices.

For arbitrary set systems, the NOC property has further structural consequences. In
particular, it is hereditary: if a set system $\CS$ has the NOC property,
then every subfamily of $\CS$ has the NOC property. Hence,
as a consequence of Corollary~\ref{cor:NOC-char},
if $\CS$ is the clustering system of a normal, equivalently, of a tree-child network,
then every clustering system $\CS'\subseteq\CS$ is the clustering system of a normal network.

It is also noteworthy that the NOC property generalizes laminar set systems, that is, set systems
in which no two sets overlap. In the setting of clustering systems, laminar set systems are precisely
hierarchies. Indeed, every laminar set system has the NOC property, since for each $C\in\CS$ there
is no set $D\in\CS$ overlapping $C$. The converse does not hold, since NOC set systems may contain
overlapping sets.

Beyond this structural observation, the NOC property also has algorithmic consequences. For
example, consider the classical \emph{Minimum Set Cover} problem. Given a finite set system $\CS$
on a ground set $X$ with $\bigcup_{S\in\CS}S=X$, the task is to find a subfamily
$\Tsys\subseteq\CS$ such that $\bigcup_{T\in\Tsys}T=X$. Such a family is called a \emph{set cover
of $X$}. If $\Tsys$ has minimum cardinality among all set covers of $X$, then it is called a
\emph{minimum set cover of $X$}. Minimum Set Cover is NP-hard in general \cite{Karp:1972}.
In contrast, for set systems satisfying the NOC property, an optimal solution is obtained directly
from the inclusion-maximal members of the set system, as shown in the following result.

\begin{proposition}\label{prop:set-cover}
	Let $\CS$ be a set system on $X$ such that $\cup_{S\in\CS}S=X$ and let $\Tsys\subseteq\CS$.
   	If $\Tsys$ is an inclusion-minimal set cover of $X$, then $\Tsys$ has the NOC property.
	Consequently, all minimum set covers have the NOC property. 
 
   Moreover, if $\Tsys$ comprises the inclusion-maximal elements of $\CS$ and $\CS$ has the NOC 
   property, then $\Tsys$ is a minimum set cover of $X$.
\end{proposition}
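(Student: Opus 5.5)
For the first statement, let $\Tsys$ be an inclusion-minimal set cover of $X$ and fix $C\in\Tsys$. Minimality means $\Tsys\setminus\{C\}$ no longer covers $X$. So there is some $x\in X$ with $x\in C$ and $x\notin T$ for every $T\in\Tsys\setminus\{C\}$. Any $D\in\Tsys$ with $D\overlaps C$ satisfies $D\neq C$, hence $x\notin D$. Therefore $x$ is a NOC witness for $C$ in $\Tsys$, and $\Tsys$ has the NOC property.

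In the same way, $x$ lies in no other member of $\Tsys$ at all. This gives a slightly stronger "private element" property, but NOC is all that is needed. For the consequence, a minimum set cover is in particular inclusion-minimal: removing any member would give a smaller cover. The first part therefore applies to it directly.

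For the second statement, let $\CS$ have the NOC property and let $\Tsys$ be the set of inclusion-maximal members of $\CS$. I will show two things.
\begin{enumerate}
\item[(a)] $\Tsys$ is a cover. Every $S\in\CS$ lies in some maximal member, since $\CS$ is finite. Hence $\bigcup\Tsys=\bigcup\CS=X$.
\item[(b)] Every set cover $\mathfrak{U}\subseteq\CS$ has $|\mathfrak{U}|\ge|\Tsys|$. I will build an injection $\Tsys\to\mathfrak{U}$.
\end{enumerate}

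For (b), take $T\in\Tsys$ and choose a NOC witness $x_T$ for $T$ in $\CS$. These witnesses are pairwise distinct, and $x_T$ lies in no other maximal member. Indeed, if $x_T\in T'$ with $T'\in\Tsys$ and $T'\neq T$, then $T\cap T'\neq\emptyset$. Since distinct maximal sets are incomparable, this gives $T\overlaps T'$, which contradicts the choice of $x_T$.

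Since $\mathfrak{U}$ covers $X$, pick $U_T\in\mathfrak{U}$ with $x_T\in U_T$. Because $x_T$ is an inclusion-visible witness for $T$ (Lemma~\ref{lem:clusters-visible-noc}), we get $U_T\subseteq T$ or $T\subseteq U_T$. In the second case, maximality of $T$ forces $U_T=T$. Either way $U_T\subseteq T$.

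It remains to show $T\mapsto U_T$ is injective. Suppose $U_T=U_{T'}$ with $T\neq T'$. Then $x_{T'}\in U_{T'}=U_T\subseteq T$, so the witness of $T'$ lies in the other maximal set $T$. This contradicts the distinctness property above. Hence the map is injective, $|\mathfrak{U}|\ge|\Tsys|$, and $\Tsys$ is a minimum set cover.

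The main obstacle is step (b), specifically arranging the chosen covering sets so that they are forced into their maximal sets. Using the inclusion-visible formulation of the witness, via Lemma~\ref{lem:clusters-visible-noc}, resolves this cleanly.
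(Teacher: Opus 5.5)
Your proof is correct and follows essentially the same route as the paper: in the first part you take a private element of each member of a minimal cover as its NOC witness. In the second part you send each maximal member $T$ to a covering set containing its NOC witness; that set is forced inside $T$, and injectivity comes from the witness of $T$ lying in no other maximal member. Comparing against an arbitrary cover rather than a minimum one is a cosmetic difference, as is routing comparability through Lemma~\ref{lem:clusters-visible-noc}. One small wording fix: the phrase ``removing any member would give a smaller cover'' should say that any proper subfamily which still covers would be a smaller cover.
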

\begin{proof}
	Let $\CS$ be a set system on $X$ such that $\cup_{S\in\CS}S=X$ and let
	$\Tsys\subseteq\CS$. Suppose that $\Tsys$ is an inclusion-minimal set cover of $X$. Let
	$T\in\Tsys$. Since $\Tsys$ is inclusion-minimal as a set cover, the family $\Tsys\setminus \{T\}$
	does not cover $X$. Hence, there exists some element $x_T\in T$ such that $x_T\notin S$ for all
	$S\in\Tsys\setminus\{T\}$. In particular, $ x_T\notin S $ for every $S\in\Tsys$ with $S\overlaps
	T$. Therefore, $ x_T\in T\setminus \bigcup_{S\in\Tsys,S\overlaps T}S, $ and hence
	$T$ is NOC. Since $T$ was arbitrary, $\Tsys$ has the NOC property.

    Suppose now that $\CS$ has the NOC 
    property and that $\Tsys$ comprises the inclusion-maximal elements of $\CS$. 
    First observe that $\Tsys$ is a set cover of $X$: every element of every member of
    $\CS$ is contained in some inclusion-maximal member of $\CS$. 
    Now consider any minimum set cover $\Tsys'\subseteq\CS$ of $X$. We show that $|\Tsys|\leq|\Tsys'|$. 
	To that end, note that each $T\in\Tsys$ is NOC, so there exists some
    $x\in T$ such that, for every $S\in\CS$ with $S\overlaps T$, we have $x\notin S$. Since
    $\Tsys'$ is a set cover, there is some element of $\Tsys'$ containing $x$; choose one such set
    and denote it by $T_x$. 
    Since $T_x\in\Tsys'\subseteq\CS$ and $x\in T_x$, the choice of $x$ implies that $T_x$ cannot overlap $T$.
    This together with
    $x\in T\cap T_x$ and inclusion-maximality of $T$ implies that $T_x\subseteq T$. Moreover, any
    $S\in\Tsys$ with $S\neq T$ either overlaps $T$ or is disjoint from $T$. In
    either case, $x\notin S$. The latter and $x\in T_x$ implies that $T$ is the only element of
    $\Tsys$ that contains $T_x$. Defining $\varphi(T)=T_x$ for each $T\in \Tsys$ therefore yields an
    injective map from $\Tsys$ to $\Tsys'$. Hence, $|\Tsys|\leq|\Tsys'|$. Since $\Tsys$ itself is a set cover, it follows
    that $\Tsys$ is a minimum-cardinality set cover of $X$.
\end{proof}

This simple connection with Minimum Set Cover suggests that the NOC property may be of independent
interest beyond its role in phylogenetic network theory. It would be interesting to investigate
which further combinatorial optimization problems become tractable on NOC set systems, and whether
the structural restrictions imposed by NOC lead to useful decomposition or recognition results in
other settings.

\bibliographystyle{elsarticle-num}
\bibliography{references}

@article{ALRR:14,
  author =        {Alcal{\`a}, Adri{\`a} and Llabr{\'e}s, Merc{\`e}
                   and Rossell{\'o}, Francesc and Rullan, Pau},
  title =         {Tree-Child Cluster Networks},
  year =          {2014},
  volume =        {134},
  number =        {1-2},
  journal =       {Fundam. Inf.},
  pages =         {1-15},
  doi =           {10.3233/FI-2014-1087},
}

@Article{Baroni:05,
  author =       {Baroni, Mihaela and Semple, Charles and Steel, Mike},
  title =        {A Framework for Representing Reticulate Evolution},
  journal =      {Annals of Combinatorics},
  volume =       {8},
  pages =        {391-408},
  year =         {2005},
  doi =          {10.1007/s00026-004-0228-0},
}

@Article{Barthelemy:08,
  journal =      {Discr. Appl. Math.},
  volume =       {156},
  year =         {2008},
  pages =        {1237-1250},
  title =        {Binary clustering},
  author =       {Barth{\'e}lemy, Jean-Pierre and Brucker, Fran{\c{c}}ois},
  doi =          {10.1016/j.dam.2007.05.024},
}

@Article{Brucker:09,
  author =       {Brucker, Fran{\c{c}}ois and G{\'e}ly, Alain},
  title =        {Parsimonious cluster systems},
  journal =      {Adv Data Anal Classif},
  year =         {2009},
  volume =       {3},
  pages =        {189-204},
  doi =          {10.1007/s11634-009-0046-7},
}

@InCollection{Bertrand:14,
  author =         {Bertrand, Patrice and Diatta, Jean},
  title =          {Weak Hierarchies: A Central Clustering Structure},
  booktitle =      {Clusters, Orders, and Trees: Methods and Applications},
  editor =         {Aleskerov, Fuad and Goldengorin, Boris and Pardalos,
                    Panos M.},
  publisher =      {Springer},
  address =        {New York},
  year =           {2014},
  pages =          {211-230},
  doi =            {10.1007/978-1-4939-0742-7_14},
}

@article{Cardona:2009,
  author  = {Cardona, Gabriel and Rossell{\'o}, Francesc and
             Valiente, Gabriel},
  doi     = {10.1109/TCBB.2007.70270},
  journal = {IEEE/ACM Trans. Comp. Biol. Bioinf.},
  pages   = {552-569},
  title   = {Comparison of tree-child phylogenetic networks},
  volume  = {6},
  year    = {2009}
}

@article{Changat:2025,
  author  = {Manoj Changat and Ameera Vaheeda Shanavas and Peter F. Stadler},
  title   = {Transit functions and clustering systems},
  journal = {The Art of Discrete and Applied Mathematics},
  volume  = {8},
  number  = {2},
  year    = {2025},
  doi     = {10.26493/2590-9770.1782.ec6}
}

@book{Dress:2011,
  author    = {Dress, Andreas and Huber, Katharina T. and Koolen, Jacobus and Moulton, Vincent and Spillner, Andreas},
  publisher = {Cambridge University Press},
  title     = {Basic Phylogenetic Combinatorics},
  year      = {2011}
}

@misc{Dai:2026,
      title={Minimum Network Level Forced by Hardwired Cluster Data}, 
      author={Shilong Dai and Yangjing Long},
      year={2026},
      eprint={2605.21945},
      archivePrefix={arXiv},
      primaryClass={q-bio.MN},
      url={https://arxiv.org/abs/2605.21945}, 
}

@article{Fuchs:2021A,
  title   = {Counting phylogenetic networks with few reticulation vertices: exact enumeration and corrections},
  author  = {Fuchs, Michael and Gittenberger, Bernhard and Mansouri, Marefatollah},
  journal = {Australasian Journal of Combinatorics},
  volume  = {81},
  number  = {2},
  pages   = {257--282},
  year    = {2021},
  issn    = {2202-3518},
  note    = {CC BY, https://creativecommons.org/licenses/by/4.0/},
  url     = {https://ajc.maths.uq.edu.au/v81.p257},
}

@article{Fuchs:2019,
  title   = {Counting phylogenetic networks with few reticulation vertices: tree-child and normal networks},
  author  = {Fuchs, Michael and Gittenberger, Bernhard and Mansouri, Marefatollah},
  journal = {Australasian Journal of Combinatorics},
  volume  = {73},
  number  = {2},
  pages   = {385--423},
  year    = {2019},
  issn    = {2202-3518},
  note    = {CC BY, https://creativecommons.org/licenses/by/4.0/},
  url     = {https://ajc.maths.uq.edu.au/v73.p385},
}

@article{Fuchs:2021B,
title = {On the asymptotic growth of the number of tree-child networks},
journal = {European Journal of Combinatorics},
author = {Michael Fuchs and Guan-Ru Yu and Louxin Zhang},
volume = {93},
pages = {103278},
year = {2021},
issn = {0195-6698},
doi = {10.1016/j.ejc.2020.103278},
}

@article{Francis:2025,
  author  = {Andrew Francis},
  doi     = {10.1016/j.jtbi.2025.112236},
  journal = {Journal of Theoretical Biology},
  pages   = {112236},
  title   = {"{N}ormal" phylogenetic networks may be emerging as the leading class},
  volume  = {614},
  year    = {2025}
}

@article{Hellmuth:2023,
  author  = {Hellmuth, Marc and Schaller, David and Stadler, Peter F.},
  doi     = {10.1007/s12064-023-00398-w},
  journal = {Theory in Biosciences},
  number  = {4},
  pages   = {301-358},
  title   = {Clustering systems of phylogenetic networks},
  volume  = {142},
  year    = {2023}
}

@misc{HLM:26,
  archiveprefix = {arXiv},
  author        = {Marc Hellmuth and Anna Lindeberg and Vincent Moulton},
  eprint        = {2605.21725},
  primaryclass  = {q-bio.PE},
  title         = {Regularizing and Normalizing {DAG}s and Phylogenetic Networks},
  url           = {https://arxiv.org/abs/2605.21725},
  year          = {2026}
}

@Article{LH:25,
author={Lindeberg, Anna and Hellmuth, Marc},
title={Simplifying and Characterizing {DAG}s and Phylogenetic Networks via Least Common Ancestor Constraints},
journal={Bulletin of Mathematical Biology},
year={2025},
month={Feb},
volume={87},
number={3},
pages={44},
doi={10.1007/s11538-025-01419-z},
}

@book{Huson:2010,
  author    = {Huson, Daniel H and Rupp, Regula and Scornavacca, Celine},
  publisher = {Cambridge University Press},
  title     = {Phylogenetic Networks},
  year      = {2010}
}

@article{Huson:2011,
  author  = {Huson, Daniel H. and Scornavacca, Celine},
  doi     = {10.1093/gbe/evq077},
  issn    = {1759-6653},
  journal = {Genome Biology and Evolution},
  month   = {01},
  pages   = {23-35},
  title   = {A Survey of Combinatorial Methods for Phylogenetic Networks},
  volume  = {3},
  year    = {2011}
}

@inproceedings{Karp:1972,
author = {Karp, Richard M.},
editor = {Miller, Raymond E. and Thatcher, James W. and Bohlinger, Jean D.},
title = {Reducibility among Combinatorial Problems},
bookTitle = {Complexity of Computer Computations: Proceedings of a symposium on the Complexity of Computer Computations},
year = {1972},
publisher = {Springer US},
pages = {85--103},
doi = {10.1007/978-1-4684-2001-2_9},
}

@article{Linz:2026,
  author = {Simone Linz and Kristina Wicke.},
  journal = {Notices of the American Mathematical Society},
  pages = {140--143},
  title = {What is phylogenetics?},
  volume = {73},
  year = {2026},
  doi = {10.1090/noti3299}
}

@InProceedings{Liu:2026,
  author =	{Liu, Hexuan and Wallner, Michael and Yu, Guan-Ru},
  title =	{{A Combinatorial Framework for the Pons-Batle Identity: Young Tableaux, Lattice Paths, and Limit Laws}},
  booktitle =	{37th International Conference on Probabilistic, Combinatorial and Asymptotic Methods for the Analysis of Algorithms (AofA 2026)},
  pages =	{13:1--13:20},
  year =	{2026},
  volume =	{381},
  editor =	{Panagiotou, Konstantinos},
  publisher =	{Schloss Dagstuhl -- Leibniz-Zentrum f{\"u}r Informatik},
  doi =		{10.4230/LIPIcs.AofA.2026.13},
  }

@article{McDiarmid:2015,
	author = {McDiarmid, Colin and Semple, Charles and Welsh, Dominic},
	journal = {Annals of Combinatorics},
	number = {1},
	pages = {205--224},
	title = {Counting Phylogenetic Networks},
	volume = {19},
	year = {2015},
  doi = {10.1007/s00026-015-0260-2}
  }

@InProceedings{Nakhleh:05,
  title =        {Phylogenetic Networks: Properties and Relationship
                  to Trees and Clusters},
  booktitle =    {Transactions on Computational Systems Biology {II}},
  author =       {Nakhleh, Luay  and Wang, Li-San},
  year =         {2005},
  editor =       {Priami, C. and Zelikovsky, A.},
  series =       {Lect. Notes Comp. Sci.},
  volume =       {3680},
  publisher =    {Springer},
  address =      {Berlin, Heidelberg},
  doi =          {10.1007/11567752_6},
  pages =        {82-99},
}

@article{Pons:2021,
	author = {Pons, Miquel and Batle, Josep},
	doi = {10.1038/s41598-021-01166-w},
	journal = {Scientific Reports},
	number = {1},
	title = {Combinatorial characterization of a certain class of words and a conjectured connection with general subclasses of phylogenetic tree-child networks},
	volume = {11},
	year = {2021},
}

@article{Willson:2010,
  author  = {Willson, Stephen J.},
  doi     = {10.1007/s11538-009-9449-z},
  journal = {Bulletin of Mathematical Biology},
  number  = {2},
  pages   = {340-358},
  title   = {Properties of Normal Phylogenetic Networks},
  volume  = {72},
  year    = {2010}
}

@misc{Yu:2026,
      title={A Short Combinatorial Proof of the Pons-Batle Identity for Counting Tree-Child Networks}, 
      author={Hao Yu and Louxin Zhang},
      year={2026},
      eprint={2609.04979},
      archivePrefix={arXiv},
      primaryClass={math.CO},
      url={https://arxiv.org/abs/2609.04979}, 
}

\end{document}